\newenvironment{proof}{{\bf Proof. } }{{\hfill $\Box$}\vspace{.5pc}}
\newtheorem{theorem}{Theorem}[section]
\newtheorem{corollary}[theorem]{Corollary}
\newtheorem{definition}[theorem]{Definition}
\newtheorem{lemma}[theorem]{Lemma}
\newtheorem{remark}[theorem]{Remark}
\newcommand{\tO}{\mathtt{O}}
\newcommand{\A}[1]	{\mathtt{Async}_{{#1}}}
\newcommand{\BEGLIST}{\begin{list}{}{\partopsep -3pt \parsep -2pt}}
\newcommand{\ENDLIST}{\end{list}}
\newcommand{\ie}{\emph{i.e., }}
\newcommand{\eg}{\emph{e.g., }}
\begin{document}

\title{Deaf, Dumb, and Chatting Robots\\
        \small{Enabling Distributed Computation and Fault-Tolerance Among Stigmergic Robot}\footnote{The work of
Shlomi Dolev and Michael Segal has been partially supported by US Air
Force grant.}
}
\author{Yoann Dieudonn\'e$^\dagger$ \and Shlomi Dolev$^\ddagger$ \and Franck Petit$^\ast$ \and Michael Segal$^\star$\\
 $^\dagger$  MIS Lab., University of Picardie Jules Verne, France\\
 $^\ddagger$ Department of Computer Science, Ben-Gurion University of the Negev, Israel\\
 $^\ast$     INRIA, LIP Lab., University of Lyon, France\\
 $^\star$    Communication Systems Engineering Dept, Ben-Gurion University of the Negev, Israel
}
\date{}
\maketitle

\begin{abstract}
We investigate ways for the exchange of information (explicit communication) among deaf and dumb mobile 
robots scattered in the plane.
We introduce the use of movement-signals (analogously to flight signals and bees waggle) 
as a mean to transfer messages, enabling the use of distributed algorithms among the robots. 
We propose one-to-one deterministic movement protocols that implement explicit communication.

We first present protocols for synchronous robots. We begin with a
very simple coding protocol for two robots.  Based on on this
protocol, we provide one-to-one communication for any system of $n
\geq 2$ robots equipped with observable IDs that agree on a common
direction (sense of direction). We then propose two solutions
enabling one-to-one communication among anonymous robots. Since the
robots are devoid of observable IDs, both protocols build
recognition mechanisms using the (weak) capabilities offered to the
robots.  The first protocol assumes that the robots agree on a
common direction and a common handedness (chirality), while the
second protocol assumes chirality only.
Next, we show how the movements of robots can provide implicit
acknowledgments in asynchronous systems. We use this result to
design asynchronous one-to-one communication with two robots only.
Finally, we combine this solution with the schemes developed in
synchronous settings to fit the general case of asynchronous
one-to-one communication among any number of robots. 

Our protocols enable the use of distributing
algorithms based on message exchanges among swarms of Stigmergic robots.  
Furthermore, they provides robots equipped with means of communication to 
overcome faults of their communication device. 
\smallskip

\textbf{Keywords}: Explicit Communication, Mobile Robot Networks, Stigmergy. 
\end{abstract}

%
%

\section{Introduction}
Research for achieving coordination among teams (or, swarms) of mobile robots
is challenging and has a great scientific and practical implications.
Swarms of mobile robots are used, and are planned to be even used more, in several critical situations.
Swarms provide the ability to measure properties, collect information and
act in a given (sometimes dangerous) physical environment.
Numerous of potential applications exist for such multi-robot systems, 
to name only a very few: environmental monitoring, large-scale construction,
risky area surrounding or surveillance, and exploration of awkward environments.

In a given environment, the ability for the swarm of robots to succeed in
the accomplishment of the assigned task greatly depends on the capabilities
the robots have, that is, their moving capacities and sensory organs.
Existing technologies allow to consider robots equipped with sensor devices
for vision (camera, radar, sonar, laser, etc.) and means of communication (wireless devices).
Means of communication are even required to enable classical distributed algorithms and to
achieve completion of some tasks like information relay, surveillance, or intelligence activities.

An interesting question is ``{\em What happens if the means of
communication are lost or not exist?}''  In that case, the robots
can observe the location of the robots but cannot communicate with
them. Such robots are called {\em deaf and dumb}. There are numerous
realistic scenarios where there is no mean to communicate. Such
scenarios are easy to figure out, \eg
\begin{itemize}
\item wireless devices are faulty;
\item the robots evolve in zones with blocked wireless communication, \eg
 hostile environments where communication are scrambled or forbidden;
\item physical constraints prevent to fit out any wireless device on the robots.
\end{itemize}
The latter can arise for instance when no space is available on the robots or
the robots are too small with respect to the size of the wireless device. This
would be the case with swarms of nano-robots.

As a matter of fact, the question of solving distributed tasks with
swarms of deaf and dumb robots is not a novel one. This question has
been extensively carried in different fields of computed science
such as artificial intelligence~\cite{M95}, control
theory~\cite{FN88,N92,K97}, and recently in the distributed
computing field~\cite{SY99,P02}. Some of these approaches are
inspired of biology studies of the animal behavior, mainly the
behavior of social insects~\cite{BHD94}. Indeed, these social
systems present an intelligent collective behavior although they are
composed by simple individuals with extremely limited capabilities.
Solutions to problems ``naturally'' emerge from the
self-organization and indirect communication of these individuals.
The capacity to communicate using such {\em indirect communication}
(or, {\em implicit} communication) is referred to as {\em stigmergy}
in the biological literature~\cite{G59}. There are numerous examples
of such indirect communication in nature, for instance from ants and
termites communicating using pheromones or from bees, communicating
performing waggle dances to find shortest paths between their nest
and food sources. The question of whether the waggle of bees is a
language or not is even an issue~\cite{WW90}.

However, stigmergy allows to make a given task only.  Communications
are not consider as a task by itself. In other words, the stigmergic
phenomenon provides indirect communication, a guidance for a
specific work. Even if, sometime stigmergy allows insects to modify
their physical environment---this phenomenon is sometime referred to
as \emph{sematectonic stigmergy}~\cite{W75}--- stigmergy never
provides a communication task by itself. In other words, stigmergy
does not allow tasks as chatting, intelligence activities, or
sending information which are not related with a given task.

In this paper, we investigate ways for the exchange of information
among deaf and dumb mobile robots scattered in the plane. In the
sequel, we refer to this task as {\em explicit
communication}---sometime, also referred to as {\em direct}
communication \cite{M95}. Explicit communication enables the use of
distributed algorithms among the robots. We study the possibility to
solve this problem deterministically when robots are communicating
only by moving with respect to certain capabilities, namely
synchrony, identities, sense of orientation, and chirality.

\paragraph{Contribution.}
We introduce the use of movement-signals (analogously to flight
signals and bees waggle) as a mean to transfer messages between deaf
and dumb robots. We propose six one-to-one deterministic protocols
that implement explicit communication.

We first present four protocols for synchronous robots. We begin
with a very simple protocol for two robots, showing how high-level
information can be coded by simple moves. Based on this protocol, we
provide one-to-one communication into any system of $n \geq 2$
robots equipped with observable IDs that agree on a common direction
(sense of direction). We then propose two solutions enabling
one-to-one communication among anonymous robots. Since the robots
are devoid of observable IDs, both protocols build recognition
mechanisms using the (weak) capabilities offered to the robots.  The
third algorithm assumes that the robots have sense of direction and
agree on a common handedness (chirality).  The fourth protocol
assumes chirality only.

In synchronous systems, since every robot is active at each time
instant, there is no concern with the receipt of the messages.
Indeed, every movement is seen by every robot.  As a consequence, no
message acknowledgment is required. By contrast, in asynchronous
settings, in each computation steps, some robots can be inactive.
Thus, few robot movements can be missed out, and as a consequence,
some messages. We show how the robot movements can provide implicit
acknowledgments in asynchronous settings. We use this result in the
design of asynchronous one-to-one communication with two robots
only.  Finally, we combine this solution with the schemes developed
in synchronous settings to fit the general case of asynchronous
one-to-one communication among any number of robots.

Note that our protocols---either synchronous or asynchronous---can be 
easily adapted to implement efficiently one-to-many or one-to-all 
explicit communication. 
Also, in the context of robots (explicitely) communicating by means 
of communication (\eg wireless), since our protocols allow robots to 
explicitly communicate even if their communication devices are faulty, 
in a very real sense, our solution can serve as a communication backup,
\ie it provides {\em fault-tolerance} 
by allowing the robots to communicate without means of communication (wireless device).

%

\paragraph{Related Work.}
The issue of handling swarms of robots using deterministic
distributed algorithm was first studied in~\cite{SY96,SY99}. Beyond
supplying formal correctness proofs, the main motivation is to
understand the relationship between the capabilities of the robots
and the solvability of given tasks.  For instance, ``{\em Assuming
that the robots agree on a common direction (having a compass), which
tasks are they able to deterministically achieve}?'', or ``{\em What
are the minimal condition to elect a leader deterministically}?''

As a matter of fact, the motivation turns out to be the study of the
minimal level of ability the robots are required to have in the
accomplishment of some basic cooperative tasks in a deterministic
way.  Examples of such tasks are specific problems, so that
\emph{pattern formation}, \emph{line formation}, \emph{gathering},
\emph{spreading}, and \emph{circle formation}---refer for instance
to~\cite{SY99,FPSW99,DK02,P02,CFPS03,PS06,DLP08,CP08} for this
problems,--- or more ``classical'' problems in the field of
distributed systems, such that {\em leader election}
\cite{P02,FPSW99,FPSW01,DP07c}. To the best of our knowledge, no
previous work addresses the problem of enabling explicit
communication in swarms of robots.

\paragraph{Roadmap.}
In the next section, we describe the model and the problem considered in this paper.
The two following sections (Section~\ref{sec:sync} and~\ref{sec:async}) are devoted to 
synchronous and asynchronous settings, respectively.  
(Due to the lack of space, proofs are given in the appendix.)
Finally, we make some concluding remarks in Section~\ref{sec:conclu}.  In the same 
section, extensions and open problems are also discussed. 

\section{Preliminaries.}
\label{sec:model}

In this section, we first define the distributed system considered
in this paper. We then state the problem to be solved.

\paragraph{Model.}
We adopt the model introduced in~\cite{SY96}, below referred to as
\emph{Semi-Synchronous Model (SSM)}. The distributed system
considered in this paper consists of $n$ mobile \emph{robots}
(\emph{agents} or \emph{sensors}).
Any robot can observe, compute and move with an infinite decimal precision.
The robots are equipped with sensors enabling to detect the instantaneous position of the other robots in the plane.
Viewed as points in the Euclidean plane, the robots are mobile and autonomous.
There is no kind of explicit communication medium.

Each robot $r$ has its own local $x$-$y$ Cartesian coordinate system with its own unit measure.
Given an $x$-$y$ Cartesian coordinate system, the \emph{handedness} is the way in which
the orientation of the $y$ axis (respectively, the $x$ axis) is inferred according to
the orientation of the $x$ axis (resp., the $y$ axis).
The robots are assumed to have the ability of \emph{chirality}, \ie
the $n$ robots share the same handedness.
%
We consider \emph{non-oblivious} robots, \ie every robot can remember its previous observations, computations,
or motions made in any previous step.

We assume that the system is either \emph{identified} or
\emph{anonymous}. In the former case, each robot $r$ is assumed to
have a visible (or, observable) identifier denoted $id_r$ such that,
for every pair $r,r'$ of distinct robots, $id_r \ne id_{r'}$. In the
latter, no robot is assumed to have a visible identified. In this
paper, we will also discuss whether the robots agree on the
orientation of their $y$-axis or not. In the former case, the robots
are said to have the \emph{sense of direction}. (Note that since the
robots have the ability of chirality, when the robots have the sense
of direction, they also agree on their $x$-axis).

Time is represented as an infinite sequence of time instants $t_0, t_1, \ldots, t_j, \ldots$
Let $P(t_j)$ be the set of the positions in the plane occupied by the $n$
robots at time $t_j$ ($j\geq0$). For every $t_j$, $P(t_j)$ is called the \emph{configuration}
of the distributed system in $t_j$.
$P(t_j)$ expressed in the local coordinate system of any robot $r_i$ is called a \emph{view}.
At each time instant $t_j$ ($j\geq 0$), each robot $r_i$ is either {\it active} or {\it inactive}.
The former means that, during the computation step $(t_j,t_{j+1})$, using
a given algorithm, $r_i$ computes in its local coordinate system a position $p_i(t_{j+1})$ depending
only on the system configuration at $t_j$, and moves towards $p_i(t_{j+1})$---$p_i(t_{j+1})$ can be equal to
$p_i(t_j)$, making the location of $r_i$ unchanged.
In the latter case, $r_i$ does not perform any local computation and remains at the same position.

The concurrent activation of robots is modeled by
the interleaving model in which the robot activations are driven by a \emph{uniform fair scheduler}.
In this paper, we discuss whether the system is \emph{synchronous} or \emph{asynchronous}.  In the former
case, every robot is active at each instant.  The latter means that at least one robot is required to
be active at each instant.

In every single activation, the distance traveled by any robot $r$
is bounded by $\sigma_r$. So, if the destination point computed by
$r$ is farther than $\sigma_r$, then $r$ moves toward a point of at
most $\sigma_r$. This distance may be different between two robots.

\paragraph{Problem.}
{\em Indirect} communication is the result of the observations of other robots.
Using indirect communication, we aim to implement {\em Direct} communication, that is a purely
communicative act, one with the sole purpose of transmitting messages~\cite{M95}.
In this paper, we consider directed communication that aim at a particular receiver.
Such communication are said to be {\em one-to-one}, specified as follows:
($Emission$) If a robot $r$ wants to send a message $m$ to a robot $r'$, then
$r$ eventually sends $m$ to $r'$; ($Receipt$) Every robot eventually receives every message which is
meant to it.

Note that the above specification induces that $r$ is able to
address $r'$.  This implies that any protocol solving the above
specification have to come up with ($1$) {\em Routing} mechanism and
($2$) {\em Naming} mechanism, the latter, in this context of
anonymous robots.

The specification also induces that the robots are able to communicate explicit messages.
So, any one-to-one communication protocol in our model
has to be able ($3$) to code explicit messages with implicit communication, 
\ie with (non-ambiguous) movements.

\section{One-to-One Communications in Synchronous Settings}
\label{sec:sync}

It is well-known that in synchronous systems, Emission and Receipt
properties are easily verified. Thus, we focus on the three
additional properties described above. We first present a solution
working for two robots only that shows how explicit information can
be easily coded making moves. The second protocol addresses Routing
property. Combined with the simple coding mechanism, it provides
one-to-one communication into any system of $n \geq 2$ robots
equipped with observable IDs and sense of direction. Our third and
fourth solutions deal with anonymous networks.  Since the robots are
devoid of observable IDs, both protocols provide Naming mechanism,
\ie they build recognition mechanisms using the capabilities offered
to the robots.  The third algorithm assumes that the robots have
sense of direction. The last one assumes robots with chirality only
(no common sense of direction).

\subsection{Coding With Two Robots}

Each even step $t_i \mapsto t_{i+1}$ ($i=2p$, $p \geq 0$) is used by each robot
to send a bit in $\{0,1\}$.  To send ``$0$'' (``$1$'', respectively) to the other robot $r'$,
a robot, $r$, moves on its right (left, resp.) with respect to the direction given by $r'$---refer to Figure\ref{com1}.
To avoid that the robots do not go either too far or near of each other,
each odd step $t_{i+1} \mapsto t_{i+2}$ is used by the robots to come back to its first position.

\begin{figure}[!htbp]
\begin{center}
    \epsfig{file=./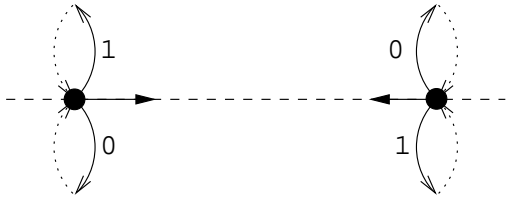, width=0.4\linewidth}
\end{center}
\caption{One-to-one communication for $2$ synchronous robots.}
\label{com1}
\end{figure}

Note that this simple protocol is \emph{silent}~\cite{DGS96} in the following sense:
a robot eventually moves if it has some messages to transmit.
Also, note that if each robot $r$ knows the maximum distance $\sigma_{r'}$ that the other robot $r'$ can cover in one step,
then, the protocol can easily be adapted to reduce the number of moves made by the robots to send bytes.
In that case, the total distance $2\sigma_{r'}$ made by $r'$ on its right and its left can be divided by the number of possible
bytes send by the robots. Then, $r'$ moves on its right or on its left of a distance corresponding to the byte sent.

\subsection{Routing With Identified Robots Having Sense Of Direction}

First, for each robot being \emph{a priori} surrounded of several
robots, our method requires to include a mechanism avoiding
collisions.  Next, it must include a technique allowing any robot to
send messages to a specific robot. In order to deal with the
collision avoidance, we use the following concept, \emph{Voronoi
diagram}, in the design of our method.

\begin{definition}[Voronoi diagram]\cite{A91}
\label{definition1}
The Voronoi diagram of a set of points $P=\{p_1,p_2,\cdots,p_n\}$ is a subdivision of the plane into $n$ cells, one for each point in $P$.
The cells have the property that a point $q$ belongs to the Voronoi cell of point $p_i$ iff for any other point $p_j \in P$,
$dist(q,p_i)< dist(q,p_j)$ where $dist(p,q)$ is the Euclidean distance between $p$ and $q$. In particular, the strict inequality
means that points located on the boundary of the Voronoi diagram do not belong to any Voronoi cell.
\end{definition}

Our protocol requires the two following preprocessing steps---executed at time $t_0$:
\begin{enumerate}
\item Each robot computes the Voronoi Diagram, each Voronoi cell being centered on a
robot position---refer to Case~($a$) in Figure~\ref{fig:ID}, the plain line.
Every robot is allowed to move into its Voronoi cell only.  This ensures the collision avoidance.

\item For each associated Voronoi cell $c_r$ of robot $r$, each robot $r$ computes the corresponding \emph{granular} $g_r$,
the largest disc of radius $R_r$ centered on $r$ and enclosed in
$c_r$---Case~($a$) in Figure~\ref{fig:ID}, the dotted lines. Notice
that the radii of different disks might be different. Each granular
is sliced into $2n$ slices, \ie the angle between two adjacent
diameters is equal to $\frac{\pi}{n}$. Each diameter is labeled from
$0$ to $n-1$, the diameter labeled by $0$ being aligned on the
North, the other are numbered in the natural order following the
clockwise direction.
\end{enumerate}

Note that since the robots share a common handedness (chirality),  
they all agree on the same clockwise direction. Having a common sense
of direction, they all agree on the same granular and slice numbering.
Once these two preprocessing steps are done, the protocol follows a similar
scheme as the case with two robots: when a robot $r$ wants to send a
bit to any robot $r'$, to send $0$ (resp., $1$), $r$ moves inside
$g_r$ on the Northern/Eastern/North-Eastern (resp.,
Southern/Western/South-Western) side on the diameter labeled $r'$.
Next, it comes back to its first position, \ie the center of $g_r$.

\begin{figure}[!htbp]
\begin{center}
  \begin{minipage}[t]{0.4\linewidth}
    \centering
    \epsfig{file=./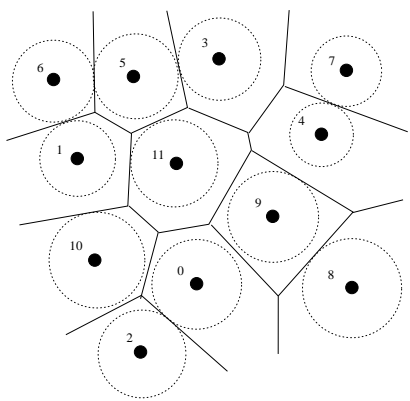, width=.8\linewidth}\newline
    {\footnotesize ($a$) An example showing the system with $12$ robots,
       viewed by each robot after the two preprocessing phases.}
  \end{minipage}
  \begin{minipage}[t]{0.5\linewidth}
    \centering
    \epsfig{file=./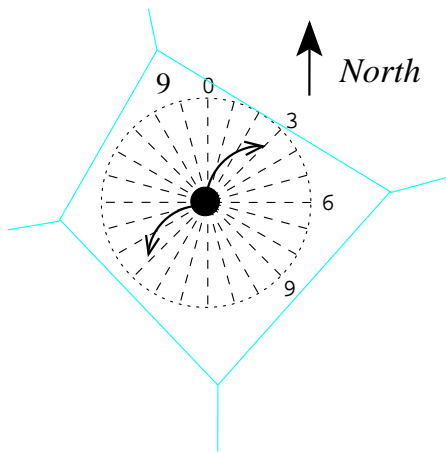, width=.7\linewidth}\newline
    {\footnotesize ($b$) Robot $9$ sends either ``$0$'' or ``$1$'' to Robot $3$.}
  \end{minipage}
 \caption{One-to-one communication with $n$ identified robots having sense of direction.}
\label{fig:ID}
\end{center}
\end{figure}



\subsection{Naming With Anonymous Robots Having Sense Of Direction}

It may seem difficult to send a message to a specific robot since no robot has an observable ID. However,
it is shown in~\cite{FPSW99} that if the robots have sense of direction and chirality, then they can agree on
a total order over the robots.  This is simply obtained as follows:  Each robot $r$ labels every observed robot
with its local $x-y$ coordinate in the local coordinate system of $r$.  Even if the
robots do not agree on their metric system, by sharing the same $x$- and $y$-axes, they agree on the same
order.

\subsection{Naming With Anonymous Robots Without Sense Of Direction}

By contrast to the previous case, with the lack of sense of
direction, the robots cannot deterministically agree on a common
labeling of the cohort.  For instance, in Figure~\ref{fig:noSoD},
the symmetry of the configuration prevent the robots to decide on a
common naming, even with the ability of chirality.

\begin{figure}[!htbp]
\begin{center}
    \epsfig{file=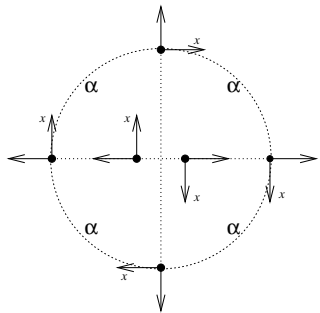, width=0.25\linewidth}
\end{center}
\caption{A symmetric example where six robots are scattered in the plane in such a 
way that for every robot, there is another robot having the same view---the arrows show
the local $x$-$y$ coordinate system of the each robot.  They are not able to agree
on a common direction nor a common naming.} \label{fig:noSoD}
\end{figure}

We now describe our method in the design of a relative (w.r.t. each
robot) naming allowing to implement one-to-one communication for
anonymous robots having no common sense of direction. We refer to
Figure~\ref{fig:SEC} to explain our scheme.

Our method starts (at $t_0$) with the two preprocessing steps
describe above.  At the end we have the Voronoi Diagram and the
sliced granulars---to avoid useless overload in the figure, the
latter are omitted in Figure~\ref{fig:SEC}. Then, still at time
$t_0$, each robot $r$ computes the \emph{smallest enclosing circle},
denoted by $SEC$, of the robot positions. Note that since the robots
have the ability of chirality, then they can agree on a common
clockwise direction of $SEC$.  
Note that $SEC$ is unique and can be computed in linear time~\cite{M83}.

Next, $r$ considers the ``\emph{horizon line}'', denoted by $H_r$,
as the line passing through itself and $\tO$, the center of $SEC$.
Given $H_r$, $r$ consider each radius of $SEC$ passing through a
robot.  The robots are numbered in the growing order following the
radii in the clockwise direction starting from $H_r$.  When several
robots are located on the same radius, they are numbered in the
growing order starting from $\tO$. Note that this means that $r$ is
not necessary labeled by $0$ if some robots are located between
itself and $\tO$ on its radius. An example of this preprocessing
phase is shown in Figure~\ref{fig:SEC} for a given robot $r$.

\begin{figure}[!htbp]
\begin{center}
    \epsfig{file=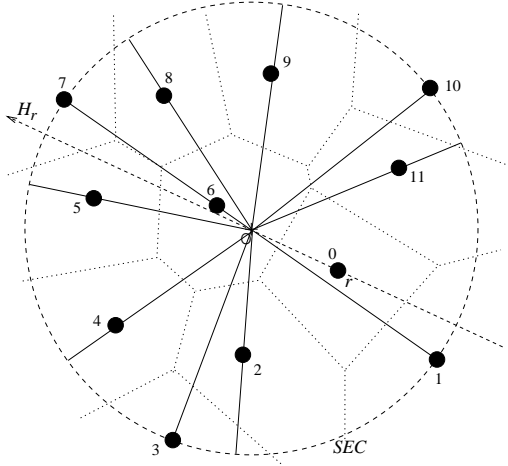, width=0.4\linewidth}
\end{center}
\caption{An example showing how the relative naming is built with respect to Robot $r$.}
\label{fig:SEC}
\end{figure}

The method to send messages to a given robot is the same as the
previous case.  Every robot $r$ slices its granular according to
$H_r$, the diameter corresponding to $H_r$ being labeled by $0$ and
so on in the clockwise direction. The sending of bit is made
following the same scheme as above, the Northern being given by the
direction of $H_r$ and the Eastern following the clockwise
direction. Each robot addresses bits according to its relative
labeling. 

Note first that every robot observes the movements of all the 
robots.  So, every robot is able to know all the messages sent in the system.
This could provide fault-tolerance by redundancy, any robot being
able to send any message again to its addressee. 

Also, note that by construction, the labelling is specific to 
each robot. However, every robot, $r$, is able to compute the labelling
with respect to each robot of the system.  Therefore, for by observing each 
movement made by any robot $r'$, $r$ is able to know to whom a bit is 
addressed, and in particular, to itself.
Every robot is able to compute the message address,
each of them being able to compute the relative naming of all the
robots.

\section{One-to-One Communications in Asynchronous Settings}
\label{sec:async}

In synchronous systems, since every robot is active at each time instant, there is no concern with the receipt of the messages. 
Indeed, every robot motion is seen by every robot.  As a consequence, no message acknowledgment is required. 
By contrast, in asynchronous settings, only fairness is assumed, \ie in each computation steps, some robots can remain inactive.  
So, some robots motions can be lost, and by the way, some messages.  
%
%
Therefore, asynchronous settings require a synchronization mechanism ensuring the acknowledgment of each message sent. 

In this section, we mainly focus on the both {\em Emission} and {\em Receipt} properties.  We first state the two 
following results---due to the lack of space, the proof of Lemma~\ref{lem:1} is given in this appendix:

\begin{lemma}
\label{lem:1}
Let $r$ and $r'$ be two robots. Assume that $r$ always moves in the same direction each time it becomes active. 
If $r$ observes that the position of $r'$ has changed twice, then $r'$  must have observed that the position of $r$ has changed at least once. 
\end{lemma}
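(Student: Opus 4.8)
The plan is to exploit the only structural hypothesis we have, namely that $r$ always moves along one fixed direction. First I would fix a unit vector $\vec{d}$ equal to that common direction and track the scalar $f(t)$ obtained by projecting the position of $r$ at time $t$ onto $\vec{d}$. Since every activation of $r$ displaces it by a positive amount along $\vec{d}$ and no step ever moves it backwards, $f$ is a non-decreasing step function of $t$ that strictly increases exactly at the instants where $r$ is active. The fact I want to record is that the positions occupied by $r$ are totally ordered along $\vec{d}$: if $r$ is active at some instant lying between two times $s<s'$, then the position of $r$ at $s$ is strictly behind (along $\vec{d}$) its position at $s'$, so in particular these two positions are distinct.

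Next I would unfold the hypothesis ``$r$ observes that the position of $r'$ has changed twice''. Let $\alpha<\beta<\gamma$ be active instants of $r$ at which it records positions $P_0,P_1,P_2$ of $r'$ with $P_0\neq P_1$ and $P_1\neq P_2$, where $\beta$ is the instant at which the first change is detected and $\gamma$ the instant at which the second is detected. Because $r'$ only relocates when it is active, the transition $P_0\rightarrow P_1$ forces $r'$ to be active and to move at some instant $v_1$ with $\alpha\le v_1<\beta$, and the transition $P_1\rightarrow P_2$ forces $r'$ to be active and to move at some instant $v_2$ with $\beta\le v_2<\gamma$. Crucially this yields $v_1<\beta\le v_2$, so the observation instant $\beta$ of $r$ sits temporally between the two relocations of $r'$.

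The heart of the argument is then to compare what $r'$ saw of $r$ at its two active instants $v_1$ and $v_2$. At $v_1$ the robot $r'$ observes $r$ at the position whose projection is $f(v_1)$, and at $v_2$ it observes $r$ at the position whose projection is $f(v_2)$. Since $r$ is active at $\beta$ its projection strictly increases at $\beta$, and $v_1<\beta\le v_2$, so the monotonicity recorded in the first step gives $f(v_1)\le f(\beta)<f(v_2)$ as soon as $v_2>\beta$; hence $f(v_1)<f(v_2)$ and the two observed positions of $r$ differ. Therefore $r'$ has indeed observed the position of $r$ change at least once, which is the claim.

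The step I expect to be the main obstacle is the boundary case $v_2=\beta$, in which $r$ and $r'$ are simultaneously active at $\beta$ and read the same snapshot; there the inequality $f(\beta)<f(v_2)$ cannot be invoked from a move strictly after $\beta$. I would resolve it by appealing to the interleaving semantics of the model: either no two robots act at the very same instant, so that $v_2=\beta$ cannot occur and one always has $v_1<\beta<v_2$, or I must locate an activation of $r$ strictly inside $[v_1,v_2)$ by a more careful bookkeeping of which snapshot each robot uses when it computes its move. Pinning down this timing convention, and checking that a genuine move of $r$ (not merely an activation with zero displacement) separates the two observations of $r'$, is the delicate part; everything else is the elementary monotonicity of $f$.
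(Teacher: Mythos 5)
Your main-case argument is correct and takes a genuinely different route from the paper: you argue directly via the monotonicity of the projection $f$ of $r$'s position onto its fixed direction of motion, whereas the paper argues by contradiction, taking $t_i$ to be the \emph{first} instant at which $r$ records the second change and splitting on whether $t_j$ (the arrival time of $r'$ at the third observed position) equals $t_i-1$ or is at most $t_i-2$. Your identification of two activation instants $v_1<\beta\le v_2$ of $r'$ straddling an activation $\beta$ of $r$, with the conclusion $f(v_1)\le f(\beta)<f(v_2)$ whenever $v_2>\beta$, is exactly the right mechanism, and is cleaner than the paper's bookkeeping.

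However, the boundary case you flag ($v_2=\beta$, both robots active on the same snapshot) is a genuine gap, and neither of your proposed escapes closes it as stated: the model explicitly allows concurrent activations (in the synchronous setting \emph{all} robots act at every instant), so you cannot legislate $v_2=\beta$ away. The case does close when $\alpha$ (the instant $r$ records $P_0$) can be taken equal to $v_1$ — then $r$'s move during $(v_1,v_1+1)$ already separates $f(v_1)$ from $f(\beta)$ — or when $r'$ holds some observation of $r$ at an instant $\le\alpha$, which then differs from its observation at $v_1$. But if $r'$'s first activation is strictly after $\alpha$ and $r$ is idle on $[v_1,\beta-1]$, the statement actually fails: take $r$ active at $0,2,3$ and $r'$ active only at $1,2$; by $t=3$ robot $r$ has seen three distinct positions of $r'$, while $r'$ has seen $r$ twice at the same place. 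So the lemma implicitly needs the assumption, made elsewhere in the paper, that all robots are awake at $t_0$ (or at least that $r'$ observes $r$ before $r$'s first move). The paper's own proof silently crosses the same bridge — its assertions that ``obviously $t_j<t_i$'' and that $r'$ ``would have noted'' a change both presuppose such a baseline observation — so the delicate point you isolated is real and not an artifact of your approach; it should be resolved by invoking that initial-activation hypothesis rather than left open.
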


\begin{proof}
By contradiction, assume that at time $t_i$, $r$ notes that the position of $r'$ has changed twice and $r'$ has not observed that the position of $r$ has changed at least once. 
Without loss of generality, we assume that $t_i$ is the first time for which $r$ notes that the position of $r'$ has changed twice. So at time $t_i$, $r$ knows three distinct positions of $r'$ and $t_i\geq2$. Let $p_j$ be the last (or the third) position of $r'$ that $r$ has observed, and $t_j$ the first time instant for which $p_j$ is occupied by $r'$. Obviously, $t_j<t_i$. Now we have two cases to consider :
\begin{itemize}
\item {\bf case 1 : $t_j=t_i-1$ }.  The fact that $r$ knows three distinct positions of $r'$ implies that $r$ is become active and has moved at least twice between $t=0$ and $t=t_i-1$ and, thus at least once between $t=0$ and $t=t_i-2$. Consequently, at time $t_j=t_i-1$, $r'$ would have noted that $r$'s position has changed at least once. Contradiction.   
\item {\bf case 2 : $t_j\leq t_i-2$}. We have two cases to consider :
\begin{itemize}
\item {\bf case a : } $r'$ moves at least once between $t_j+1$ and $t_i-1$. In that case, $r$  notes that the position of $r'$ has changed twice before time $t_i$ : that contradicts the fact that $t_i$ is the first time for which $r$ notes that the position of $r'$ has changed twice.   
\item {\bf case b : }$r'$ does not move between $t_j+1$ and $t_i-1$.  As mentioned above, the fact that $r$ knows three distinct positions of $r'$ implies that $r$ is become active and has moved at least twice between $t=0$ and $t=t_i-1$. However, $r'$ does not move between $t_j+1$ and $t_i-1$. Hence, $r$ has moved at least twice between  $t=0$ and $t=t_j$ and, thus at least once between $t=0$ and $t=t_j-1$. So, at time $t_j$ $r'$ would have noted that $r$'s position has changed at least once. Contradiction.  
\end{itemize}  
\end{itemize}     
\end{proof}


\begin{corollary}
\label{cor:1}
Let $r$ and $r'$ be two robots. Assume that $r$ always moves in the same direction on a line $l$ as soon as it becomes active. 
If $r$ observes that the position of $r'$ has changed twice, then $r'$ knows the line $l$ and the direction toward which $r$ moved. 
\end{corollary}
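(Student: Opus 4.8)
The plan is to derive the corollary directly from Lemma~\ref{lem:1}, which already does the substantive work. The only new hypothesis is that the common direction of $r$'s moves is fixed along a specific line $l$, and the only new conclusion is that $r'$ can reconstruct both $l$ and the orientation of $r$'s motion. Since Lemma~\ref{lem:1} already guarantees that $r'$ has observed at least one change of $r$'s position, I would argue that such an observation already carries enough geometric information to pin down $l$ together with its orientation, so that the corollary becomes a reading-off of data made available by the lemma.

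First I would unpack what ``$r'$ observed that the position of $r$ has changed at least once'' means in the model: there exist two activation instants of $r'$ at which $r'$ recorded $r$ at positions $p_1$ and $p_2$ with $p_1 \neq p_2$, where $p_1$ is the earlier observation and $p_2$ the later one. By the standing hypothesis that $r$ only ever moves along $l$, both $p_1$ and $p_2$ lie on $l$. Being distinct, two points determine a unique line, so $r'$ recovers $l$ as the line through $p_1$ and $p_2$.

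Next I would recover the direction. Because $r$ moves in the \emph{same} direction at every activation, its successive positions are monotone along $l$; hence the position observed later is strictly farther along $l$, in $r$'s direction of motion, than the one observed earlier. Since $r'$ is non-oblivious and therefore remembers the temporal order of its observations, it knows that $p_1$ precedes $p_2$, so the oriented segment from $p_1$ to $p_2$ gives exactly the direction toward which $r$ moved.

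The only point requiring care --- and the ``obstacle'', such as it is --- is this monotonicity step: one must be certain that the two distinct \emph{observed} positions are correctly ordered, i.e.\ that $r$ never reverses or backtracks so as to place a later-observed position behind an earlier one along $l$. This follows precisely from the hypothesis that every move of $r$ is in the same direction, which makes the whole sequence of $r$'s positions monotone along $l$ and forces the observed temporal order to coincide with the geometric order in the direction of motion. Given this, the conclusion that $r'$ knows both $l$ and the direction is immediate.
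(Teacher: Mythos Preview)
Your proposal is correct and follows exactly the intended route: the paper states the corollary without proof, as an immediate consequence of Lemma~\ref{lem:1}, and what you have written is precisely the natural unpacking of that implication. There is nothing to add.
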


\subsection{One-to-One Communications With Two Asynchronous Robots} 
\label{sub:async-2}

Both robots follow the same scheme, in the sequel referred to as Protocol~$\A{2}$.  Each time a robot, let us say $r$, becomes active, it moves in the opposite direction of 
the other robot, $r'$.  Let us call this direction the $North_r$. 
Robot $r$ behaves like this while it has nothing to send to $r'$.  As soon as $r$ observes that
the position of $r'$ changed twice, by Corollary~\ref{cor:1}, $r$ is guaranteed that $r'$ knows the line $H$ and the direction on which 
$r$ has moved.  Let us call the line $H$ the \emph{horizon line}.  Note that since the two robots follow the same behavior,
$H$ is common to both of them and their respective North are oriented in the opposite direction.  

%

 From this point on, $r$ can start to send messages to $r'$.  
When $r$ wants to send a bit ``$0$'' (``$1$'', respectively) to $r'$, $r$ moves along a line perpendicular to
on the Est side (West side, resp.) of $H$ with respect to $North_r$.  It then move in the same direction each time 
it becomes active until it observes that the position of $r'$ has changed twice.  At this moment on, from 
Lemma~\ref{lem:1}, $r$ knows that $r'$ has seen it in its Est side.  
Then, $r$ comes back on $H$. 
Once $r$ is located on $H$, it starts to move again toward the $North_r$ direction until it observes 
that $r'$ has moved twice. In this way, if Robot~$r$ wants to communicate another bit (following the same scheme), 
it is allowed to move on its East or West side again. So, the new bit and the previous bit are well distinguished 
by Robot~$r'$ even if they have the same value.
An example of our scheme is shown in Figure~\ref{async-2r}.

\begin{figure}[!htbp]
\begin{center}
    \epsfig{file=./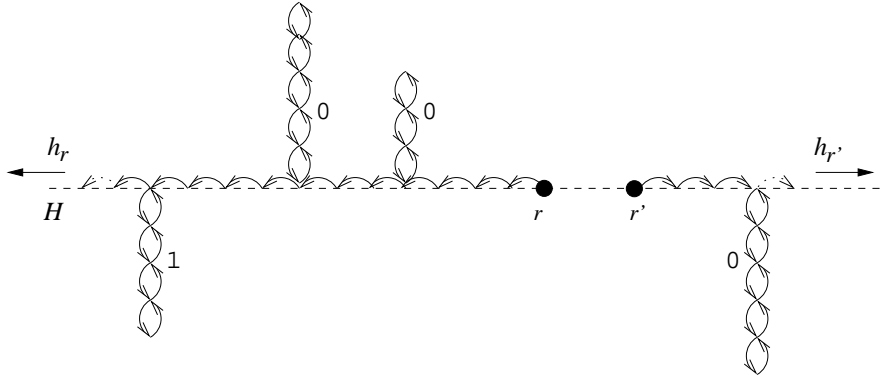, width=0.7\linewidth}
\end{center}
\caption{Asynchronous one-to-one communication for 2 robots. Robot~$r$ sends ``$001\ldots$'', Robot~$r'$ sends ``$0\ldots$''.}
\label{async-2r}
\end{figure}

Note that by Lemma~\ref{lem:1}, Protocol~$\A{2}$ ensures the {\em Receipt} property provided the following condition: 
$r$ observed that the position of $r'$ changed twice before any direction change.  We now show that Protocol~$\A{2}$ ensures this condition. 

\begin{remark}
\label{rem6}
If any robot becomes active, then it moves.
\end{remark}

\begin{lemma}
\label{lem3}
Let $r$ and $r'$ be two robots.  In every execution of Protocol~$\A{2}$, 
$r$ observes that the position of $r'$ changes infinitely often.
\end{lemma}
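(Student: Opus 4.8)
The plan is to exploit the monotone geometric progress that Protocol~$\A{2}$ forces on each robot, together with the fairness of the scheduler. First I would record the two driving facts. By the uniform fair scheduler of the model, each robot, and in particular $r'$, becomes active at an infinite increasing sequence of instants, and by Remark~\ref{rem6} $r'$ genuinely changes position at every one of these activations. Symmetrically, $r$ is active at an infinite increasing sequence of instants $t_{a_1} < t_{a_2} < \cdots \to \infty$, observing the current position of $r'$ each time. The real content is therefore not that $r'$ moves infinitely often, but that $r$ cannot keep observing $r'$ at one and the same place; I must rule out the ``oscillation'' scenario in which $r'$ leaves a point and always returns to it before $r$ looks again.

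The key observation I would establish is a monotonicity invariant. Fix the common horizon line $H$ and the direction $North_{r'}$ produced by the initial synchronization phase (itself finite, since a robot stuck in it would simply move toward $North_{r'}$ forever). Let $N(t_j)$ denote the coordinate of $r'$ along $North_{r'}$. Every move prescribed by the protocol is either a North move (idle progress, or the inter-bit ``barrier'' preceding each new bit), which increases $N$ by a definite amount, or a move perpendicular to $H$ while emitting a bit, or the return to $H$, both of which leave $N$ unchanged. Hence $N(t_j)$ is non-decreasing in $j$ throughout the whole execution.

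Next I would split on the long-run behaviour of $r'$. If $r'$ performs infinitely many North moves, then, each advancing $N$ by a definite step, $N(t_j) \to \infty$. Otherwise $r'$ makes only finitely many North moves; after the last one it can neither stay idle (idling \emph{is} moving North) nor complete an emission (finishing a bit forces a return to $H$ followed by another North barrier move), so it must remain forever inside a single emission phase, moving in the same perpendicular direction each time it is active. In that case its coordinate perpendicular to $H$ is non-decreasing and, by the same definite-step argument, tends to infinity. Either way $r'$ possesses a coordinate $\phi(t_j)$ that is non-decreasing in $j$ and unbounded.

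To finish I would sample this coordinate at $r$'s observation instants: $\phi(t_{a_k})$ is non-decreasing, because $\phi$ is non-decreasing and the times $t_{a_k}$ increase, and it tends to infinity since $t_{a_k} \to \infty$ and $\phi \to \infty$. A non-decreasing real sequence tending to infinity cannot be eventually constant, so its consecutive terms differ for infinitely many $k$; this is exactly the assertion that $r$ observes the position of $r'$ change infinitely often. The main obstacle is the oscillation scenario flagged above, and the whole weight of the argument rests on the monotonicity invariant: it prevents $r'$ from endlessly coming back to a position it has already left and thereby forces the positions sampled by $r$ to escape every bounded set.
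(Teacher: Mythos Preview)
Your argument is correct and takes a genuinely different route from the paper. The paper proceeds by contradiction: assuming $r$ eventually always observes $r'$ at the same position, it splits on whether $r'$ eventually stops emitting bits (then $r'$ drifts along $H$ forever in one direction, so $r$ cannot keep seeing the same point) or emits infinitely often (then $r'$ changes direction infinitely often, hence has seen $r$ move twice before each change, and Lemma~\ref{lem:1} forces $r$ to have seen $r'$ move---contradiction). You instead exploit the geometric monotonicity of the $North_{r'}$-coordinate under Protocol~$\A{2}$: North moves strictly increase it, perpendicular moves (emission and return to $H$) preserve it, so positions sampled by $r$ cannot stagnate. This is more elementary---you never invoke Lemma~\ref{lem:1}---but also more fragile: it rests on the unidirectional drift of the basic $\A{2}$ and would not transfer to the variant at the end of the subsection where robots alternate direction on $H$, whereas the paper's Lemma~\ref{lem:1}-based Case~2 is insensitive to monotonicity of the $North$-coordinate. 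One minor over-statement: you do not need each North step to advance by a ``definite amount'' nor $N\to\infty$; it suffices that $N$ is non-decreasing and strictly increases at infinitely many instants, which already prevents the sampled subsequence $N(t_{a_k})$ from being eventually constant.
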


\begin{proof}
Assume by contradiction that, there exists some executions of Protocol~$\A{2}$ such that, 
eventually, $r$ observes that that position of $r'$ remains unchanged. 
Consider the suffix of such an execution where $r$ observes that the position of $r'$ remains unchanged. 
Assume that $r'$ is eventually motionless.  By fairness and Remark~\ref{rem6}, this case is impossible.
So, $r'$ moves infinitely often.  Thus, each time that $r$ observes $r'$, $r'$ is at the same position. 
There are two cases to consider:
\begin{enumerate}
\item Robot $r'$ eventually sends no bit.  In that case, by executing Protocol~$\A{2}$, $r'$ moves infinitely often in the 
same direction on $H$.  Which contradicts that each time that $r$ observes $r'$, $r'$ at the same position. 

\item Robot $r'$ sends bits infinitely often.  Since $r'$ is at the same position each time $r$ observes it, $r'$ 
goes in a direction and comes back at the same position infinitely often.  So, from Protocol~$\A{2}$, $r'$ 
observed that the position of $r$ changed twice infinitely often (each time $r'$ changes its direction).  
By Lemma~\ref{lem:1}, each time $r'$ observes that the position changed twice, $r$ observes that the position 
of $r'$ has changed at least once.  A contradiction.
\end{enumerate}
\end{proof}

Lemmas~\ref{lem:1} and~\ref{lem3} ensures that Protocol~$\A{2}$ ensures the {\em Receipt} property.
Furthermore, Lemma~\ref{lem3} ensures that no robot is starved sending a bit (\ie it can change
its direction infinitely often).  So, Property {\em Emission} is guaranteed by Protocol~$\A{2}$. 
That leads to the following theorem:

\begin{theorem}
Protocol~$\A{2}$ implements one-to-one explicit communication for two robots. 
\end{theorem}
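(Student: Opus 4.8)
The plan is to verify that Protocol~$\A{2}$ satisfies all three properties required for one-to-one explicit communication, namely \emph{coding} of messages, \emph{Emission}, and \emph{Receipt}, by assembling the lemmas already established in this subsection rather than reasoning about the dynamics from scratch. Since there are only two robots, the \emph{Routing} and \emph{Naming} mechanisms are trivial (each robot is the unique addressee of the other), so the substance of the theorem reduces to showing that bits can be encoded unambiguously and that both liveness properties hold.

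First I would argue the coding property. By construction of the protocol, a robot $r$ sends ``$0$'' by moving to the East side of the horizon line $H$ (with respect to $North_r$) and ``$1$'' by moving to the West side, then returning to $H$ and advancing along $North_r$ before the next bit. The key point to make explicit is \emph{disambiguation}: I would invoke Lemma~\ref{lem3}, which guarantees that $r'$ observes $r$'s position changing infinitely often, together with the intermediate return-to-$H$-and-advance step, to conclude that two consecutive bits of the same value are separated by an observable motion along $H$ and are therefore distinguished by $r'$. Thus each transmitted bit corresponds to a well-defined, non-ambiguous movement pattern.

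Next I would establish \emph{Receipt}. The protocol is designed so that $r$ does not change its direction (i.e. does not proceed to the next bit or return to $H$) until it has observed that $r'$'s position changed twice. Corollary~\ref{cor:1} then guarantees that at that moment $r'$ has correctly identified the line and direction along which $r$ moved, hence the side (East or West) on which the bit was sent; this is exactly the decoding of the bit. The remaining worry is whether $r$ is ever \emph{stuck} waiting for $r'$ to move twice, which would break this chain. Here Lemma~\ref{lem3} is the crucial ingredient: it certifies that $r$ observes $r'$ move infinitely often, so the condition ``$r'$ changed twice'' is eventually met for every bit, and no message transmission blocks forever. Combining Lemma~\ref{lem3} with Lemma~\ref{lem:1} therefore yields that every bit sent is eventually received and correctly decoded.

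Finally, \emph{Emission} follows from the same non-starvation guarantee: Lemma~\ref{lem3} ensures that a robot wishing to send is never indefinitely prevented from changing its direction, so if $r$ wants to send a message $m$ it will eventually complete the transmission of each of its bits. The step I expect to be the main obstacle is the rigorous disambiguation argument for consecutive equal-valued bits, since it requires carefully tying the ``return to $H$, then advance along $North_r$ until $r'$ moves twice'' phase to an observation that $r'$ actually registers, so that the boundary between two identical bits is detectable; once the supporting lemmas are marshalled correctly, the three properties assemble directly into the statement of the theorem.
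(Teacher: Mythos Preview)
Your proposal is correct and follows essentially the same approach as the paper: the paper's argument (given in the paragraph immediately preceding the theorem) simply combines Lemma~\ref{lem:1} and Lemma~\ref{lem3} to obtain \emph{Receipt}, and uses the non-starvation guaranteed by Lemma~\ref{lem3} to obtain \emph{Emission}. Your write-up is more detailed---you make the coding/disambiguation argument explicit and invoke Corollary~\ref{cor:1} for the decoding step---but the structure and the key lemmas are identical.
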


However, this scheme has the drawback of making the two robots moving away infinitely often of each other.
To deal with this drawback, the two robots can alternate their direction on $H$ each time they send a new bit. 
Lemmas~\ref{lem:1} and~\ref{lem3} ensure that no bit would be missed. 
However, with this protocol, the robots can collide together.  To avoid that the robots neither collide, 
they can divide the covered distance by $x>1$ in each move.  This introduces another drawback: the robots 
are required to be able to move an infinitesimally small distance.  This is also the main drawback of
the protocol of any number of robots in the next subsection.  

\subsection{One-to-One Communication For Any Number Of Anonymous Asynchronous Robots}
\label{sub:async-n}

We now combine the previous method with the one developed in Section~\ref{sec:sync}.
In the following, we describe our method assuming anonymous robots , devoid of sense of direction, \ie 
the weakest assumptions made in Section~\ref{sec:sync}.
Obviously, our method is also compatible with extra assumptions, namely IDs or sense of direction.

We assume that the robots knows $P(t_0)$, \ie the positions of the robots are known
by every robot in $t_0$ or all the robots are awake in $t_0$.
Using $P(t_0)$, when a robot wakes for the time (possibly after $t_0$), it computes all
the preprocessing steps, including the computation of $SEC$.  The only difference is that the granular
is sliced in $n+1$ slices---instead of $n$ slices as in the synchronous case.  Consider that the extra slice,  
corresponding to $H_r$, the diameter being on radius of $SEC$ passing through $r$) is not assigned to a particular
robot.  Let us call this slice $\kappa$.  In our method, $\kappa$ plays the role of the horizon line $H$ as for the 
case with two robots---refer to Figure~\ref{asynchro}.  
That is, each robot moves on its respective $\kappa$ to indicate that it has no bit to transmit.

\begin{figure}[!htbp]
\begin{center}
    \epsfig{file=./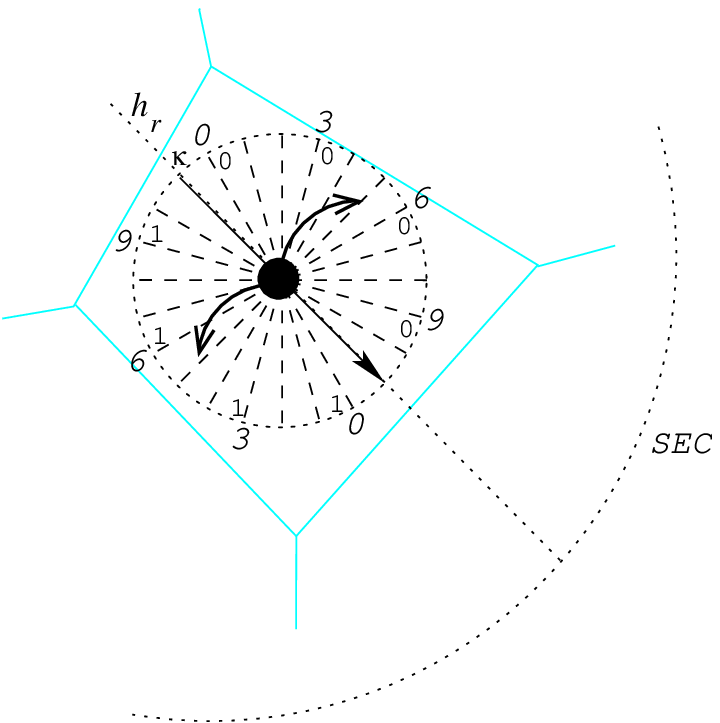, width=0.3\linewidth}
\end{center}
\caption{An example showing how the granular is sliced with respect to $r$.
      \label{asynchro}
   }
\end{figure}

Thus, for every robot $r$, Protocol $\A{n}$ works as follows:
\begin{enumerate}
\item If $r$ wants to send a bit to a robot $r'$, then if $r$ is not at the center of its granular then it comes 
back on this latter one. Then, $r$ moves on the Northern/Estern/North-Estern 
(resp., Southern/Western/South-Western) side on the diameter labelled with $r'$. It then moves in the same direction on the 
diameter labelled $r'$ each time it becomes active.  To make sure that all the robots observed its movement,
$r$ moves in the same direction  until it observes that the position of every robot changed twice. 
Next, $r$ comes back to the center of its granular (\ie 
its initial position in $P(t_0)$ by moving along the same slice. Finally, it moves in the same direction 
on $\kappa$ avoiding to reach the border of its granular before it observes that the position of every 
robot changed twice. In this way, it is now allowed to send a new bit by following the same 
scheme and we are sure that the new bit and the previous bit are well distinguished by 
Robot~$r'$ even if they have the same value.

\item If $r$ does not want to send a bit, then while $r$ does not want to send any bit, it moves in the same direction 
on $\kappa$ avoiding to reach the border of its granular before it observes that the position of every robot changed twice.
When the latter occurs, it moves on the opposite direction, and so on.  If eventually, $r$ wants to send a bit,
 then it comes back toward the center following the same rule as the first case.
\end{enumerate} 

Note that in both case, while a robot is moving toward a direction (on a slice), it must avoid to reach 
either the border or the center of the granular.  In order to implement this constraint, it is required
the robots divide the covered distance by $x>1$ in each move. Also, as for the synchronous case, every 
robot is able to read every message sent by any robot $r$ to any robot $r'$, enabling a kind of redundancy. 

By applying the same reasoning as in Subsection~\ref{sub:async-2} for every pair of robots, we can claim:

\begin{theorem}
Protocol~$\A{n}$ implements one-to-one explicit communication for two robots. 
\end{theorem}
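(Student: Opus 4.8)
The plan is to reduce the correctness of Protocol~$\A{n}$ to the already-established two-robot case (Subsection~\ref{sub:async-2}) by observing that, from the viewpoint of any single transmitter, all other robots collectively play the role of the single receiver $r'$ of Protocol~$\A{2}$. First I would fix an arbitrary ordered pair of robots $(r,r')$ and argue that the three obligations from the problem specification---Routing, Naming, and the coding of explicit messages---are inherited from Section~\ref{sec:sync}, so that the only substantive thing left to prove is that the asynchronous Emission and Receipt properties hold for the $\A{n}$ bit-passing discipline. Because the preprocessing (Voronoi diagram, sliced granular, relative naming via $SEC$ and $H_r$) is computed once from the common knowledge $P(t_0)$, every robot can compute the relative labelling of every other robot, so $r$ can address $r'$ unambiguously on the diameter labelled with $r'$, and every observer can decode to whom each bit is addressed; I would cite the synchronous constructions for this and treat it as machinery already in place.

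The heart of the argument is to transport Lemma~\ref{lem:1}, Corollary~\ref{cor:1}, and Lemma~\ref{lem3} from the two-robot setting to $\A{n}$. The key structural point is that in $\A{n}$ a sender $r$ does not advance to a new bit until it has observed that the position of \emph{every} robot has changed twice; in particular it waits until $r'$ has moved twice. So, restricting attention to the pair $(r,r')$, the behaviour of $r$ is exactly the single-direction-until-two-observed-moves behaviour assumed in Lemma~\ref{lem:1}: each move on the diameter labelled $r'$ (or the return along the slice, or the motion on $\kappa$) is a monotone motion in a fixed direction along a fixed line, repeated until $r$ sees $r'$ move twice. I would therefore apply Lemma~\ref{lem:1} to conclude that whenever $r$ observes two position changes of $r'$, robot $r'$ has observed $r$ move at least once, which by Corollary~\ref{cor:1} lets $r'$ recover the line and direction and hence decode the transmitted bit. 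This gives Receipt, and the fact that the bit is sent on the diameter labelled $r'$ together with the interleaving of $\kappa$-moves between successive bits guarantees that consecutive equal-valued bits remain distinguishable, exactly as in the two-robot case.

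To close Emission and the no-starvation requirement I would replay Lemma~\ref{lem3}: I must rule out the degenerate execution in which $r$ never sees $r'$'s position change twice. Here I would reuse the case analysis of Lemma~\ref{lem3} (either $r'$ is eventually motionless, excluded by fairness and Remark~\ref{rem6} since an active robot always moves on $\kappa$ when it has nothing to send; or $r'$ moves infinitely often yet $r$ always sees it in the same place, which is contradicted by $r'$'s own monotone progress along $\kappa$ or by Lemma~\ref{lem:1} applied to $r'$). The one genuinely new wrinkle compared with $\A{2}$ is the simultaneous interaction of $r$ with all $n-1$ other robots: $r$ advances only after \emph{all} of them have moved twice, so I must check that this global synchronisation cannot deadlock. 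I would argue that each individual pairwise wait terminates by the two-robot analysis, and since fairness guarantees every robot is activated infinitely often and every activation produces a move (Remark~\ref{rem6}), the conjunction of finitely many individually-terminating waits also terminates; the collision-avoidance device of dividing the covered distance by $x>1$ ensures every such move stays strictly inside the granular and off its center and border, so moves never degenerate to zero displacement.

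The step I expect to be the main obstacle is precisely this globalisation of the waiting condition: in $\A{2}$ the sender waits on a single partner, whereas in $\A{n}$ it waits on the conjunction ``every robot moved twice,'' and one must verify that no adversarial-but-fair schedule can keep one slow robot perpetually one move behind and thereby stall the whole protocol. The resolution is that fairness is on the scheduler, not chosen adversarially against progress, so each robot is activated infinitely often and, by Remark~\ref{rem6}, genuinely moves each time; hence ``every robot moves twice'' is eventually satisfied, and the pairwise argument of Subsection~\ref{sub:async-2} applied to each pair $(r,r')$ yields the theorem. (I would also flag the apparent typo in the statement---it should read ``for any number of robots,'' not ``for two robots''---since the reduction is over all pairs.)
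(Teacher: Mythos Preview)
Your proposal is correct and follows the same approach as the paper: the paper's entire proof is the single sentence ``By applying the same reasoning as in Subsection~\ref{sub:async-2} for every pair of robots, we can claim,'' i.e., exactly the pairwise reduction to Lemmas~\ref{lem:1}, \ref{lem3} and Corollary~\ref{cor:1} that you outline. You actually supply more detail than the paper does---in particular the globalisation argument that the conjunction ``every robot moved twice'' terminates under fairness---and your observation about the typo in the statement is also correct.
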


\section{Concluding Remarks, Extensions, and Open Problems}
\label{sec:conclu}

In this work we proposed (deterministic) movement protocols that
implement explicit communication, and therefore
allow the application of (existing) distributed algorithms;
distributing algorithms that use message exchanges.
Movements-signals are introduced as a mean to transfer messages
between deaf and dumb robots.
The movement protocols can serve as a backup to other means of (e.g.,
wireless) communication.

In this paper we first presented protocols for synchronous robots. Some of
them assume
that the robots are equipped with observable IDs
and that the robots agree on a common direction.
We also proposed protocols enabling one-to-one communication among anonymous
robots by building recognition mechanisms assuming the robots
share a common handedness only. Note that the robots may decide to flock in
a certain direction, subtracting
the agreed upon global flocking movement in order to preserve the
relative movements used for communication.

We have adapted the protocols developed in synchronous settings to fit
the general case of asynchronous one-to-one communication among any
number of robots.\\

Next, we list several notable extensions and open problems.

\paragraph{Silent, Finite Movements and Discrete Computation.}
We call the ability for a communication protocol to be {\em silent} when
a robot eventually moves if it has some message to transmit.
Note that this desirable property would help to save energy resources
of the robots.
The protocols proposed with synchronous settings
(Section~\ref{sec:sync}) are clearly silent.
Our asynchronous solutions are not silent (Remark~\ref{rem6}).
The question of whether the design of silent asynchronous algorithms is
possible or
not is open.
A related issue concerns the {\em distance (eventually) covered} by the
robots.
In this paper, we could not overcome either infinite growing distance
or infinitely decreasing distance.  As a matter of fact, we believe
that the fact that
an asynchronous robot that sequentially observes another robot at the
same place, cannot determine
whether the robot moved and returned to the same position or did not
move at all, implies an
impossibility for a communication by a finite number of moves.

Computations with an {\em infinite decimal precision} is different
and, in a way, weaker assumption
than infinitely small movements.  Indeed, one can assume infinite
decimal precision with
the ``reasonable'' assumption of finite movements, \ie with a {\em
minimal} and {\em maximal} distance
covered in one atomic step, or even step over a grid. In this paper,
we assumed a maximal covered distance ($\sigma_r$),
but not a minimal covered distance. This would be the case by assuming
that the plane is
either a grid or a hexagonal pavement~\cite{GWB03}.
For instance, with such assumption, the robots could be prone to make
computation errors due to round off,
and, therefore, face a situation where robots are not able to identify
all of possible $2n$ directions
obtained by slices inside of disks and are limited to recognize only
a certain number of directions. This case could be solved
by avoiding the use of $2n$ slices of granular by transmitting the index of
the robot to whom
the message intended following the message itself. For this we would
need only $k+1$, $1 \leq k < 2n$ segments (or $2k +1$ slices). In
particular, we would use one segment for message transmission (as in
the case of two robots); using the other $k$ segments the robot who
wants to transmit a message allows to transmit the index of the
robot to whom the message is designated. Definitely, such index can
be represented by $\frac{\log{n}}{\log{k}}=\log_{k}n$ symbols.
Notice, this strategy would slow down the algorithm and increase the
number of steps required to transmit a message. More
precisely, the number of steps required in this method to identify
the designated robot is $\log_{k}n$. For example, by taking
$O(\log{n})$ slices instead of $O(n)$, the number of steps to
transmit a message would increase by $O(\frac{\log{n}}{\log\log{n}})$.

\paragraph{Partial synchrony.}
In the continuity of the above discussion, an other important feature
in the field of
mobile robots is the weakness/strength of the model.
For instance, in this paper, we used the semi-synchronous model (SSM).
 It would be interesting to
achieve solutions by {\em relaxing synchrony} among the robots to
achieve solutions into a fully
asynchronous model (\eg, CORDA~\cite{P02}).

\paragraph{Visibility.}
Another issue would the {\em visibility capability} of the
robots~\cite{FPSW99(b)}.
For instance, the following question could be investigated: ``Can
one-to-one communication be achieved by a team of robots with
limited visibility?''  With the same approach, the problem of explicit
communication could be
addressed assuming that the amount of {\em memory} the robots are
equipped with is bounded.

\paragraph{Stabilization.}
{\em Stabilization}~\cite{D00} would be a very desirable property to enable.
It seems that, in our case, stabilization can be achieved in the
synchronous case by carefully adapting
the protocols proposed in Section~\ref{sec:sync}; say by assuming a
global clock (using GPS input)
returning to the initial location and (re)computing the prepossessing
phase every round timestamp.
The self-stabilization property for the asynchronous case requires
further study.

\bibliographystyle{plain}
\bibliography{comm_robots}
\end{document}